\theoremstyle{plain}
\newtheorem{theorem}{Theorem}
\newtheorem{lemma}[theorem]{Lemma}
\theoremstyle{definition}
\newtheorem{definition}[theorem]{Definition}
\title{On a conjecture by Belfiore and Sol\'e on some lattices}
\author{Anne-Maria Ernvall-Hyt\"onen}
\address{Department of Mathematics, University of Helsinki}
\thanks{The funding from the Academy of Finland (grant number 138337) is gratefully acknowledged. The author would also like to thank Professor P\"ar Kurlberg and Dr. Camilla Hollanti for useful discussions, and Professors Patrick Sol\'e and Jean-Claude Belfiore for important comments.}
\begin{document}
\maketitle
\begin{abstract}
The point of this note is to prove that the secrecy function attains its maximum at $y=1$ on all known extremal even unimodular lattices. This is a special case of a conjecture by Belfiore and Sol\'e. Further, we will give a very simple method to verify or disprove the conjecture on any given  unimodular lattice. 
\end{abstract}
\section{Introduction}
Belfiore and Oggier defined in \cite{belfioggiswire} the secrecy gain
\[
\max_{y\in \mathbb{R}, 0<y}\frac{\Theta_{\mathbb{Z}^n}(yi)}{\Theta_{\Lambda}(yi)},
\]
where
\[
\Theta_{\Lambda}(z)=\sum_{x\in \Lambda}e^{\pi i ||x||^2z},
\]
as a new lattice invariant to measure how much confusion the eavesdropper will experience while the lattice $\Lambda$ is used in Gaussian wiretap coding. The function $\Xi_{\Lambda}(y)=\frac{\Theta_{\mathbb{Z}^n}(yi)}{\Theta_{\Lambda}(yi)}$ is called the secrecy function. Belfiore and Sol\'e then conjectured in \cite{belfisole} that the secrecy function attains its maximum at $y=1$, which would then be the value of the secrecy gain. The secrecy gain was further studied by Oggier, Sol\'e and Belfiore in \cite{belfisoleoggis}. The main point of this note is to prove the following theorem:

\begin{theorem}\label{main}
The secrecy function obtains its maximum at $y=1$ on all known even unimodular extremal lattices.
\end{theorem}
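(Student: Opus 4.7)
The plan is to exploit the modular symmetry of the theta series to reduce the inequality to a one-variable polynomial statement, which I then verify case by case. Applying the transformation rule $\Theta_\Lambda(-1/\tau) = (\tau/i)^{n/2}\Theta_\Lambda(\tau)$ valid for an even unimodular $\Lambda$, together with the corresponding rule $\theta_3(-1/\tau) = (\tau/i)^{1/2}\theta_3(\tau)$ for $\Theta_{\mathbb{Z}^n} = \theta_3^n$, one obtains immediately the functional equation
\[
\Xi_\Lambda(1/y) = \Xi_\Lambda(y), \qquad y > 0.
\]
In particular $y = 1$ is a critical point of $\Xi_\Lambda$, so the content of Theorem~\ref{main} is the assertion that this critical point is actually the global maximum on $(0,\infty)$.

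The next step is to convert the problem into a one-variable polynomial inequality. Writing $n = 8k$ and setting
\[
t(y) = \frac{\theta_2(iy)^4\,\theta_4(iy)^4}{\theta_3(iy)^8},
\]
Jacobi's identity $\theta_2^4 + \theta_4^4 = \theta_3^4$ implies that $t$ is also invariant under $y \mapsto 1/y$, lies in $(0, 1/4]$ on $(0,\infty)$, and attains the value $1/4$ only at $y = 1$. Because $\Theta_\Lambda$ is a modular form of weight $4k$ for $\operatorname{SL}_2(\mathbb{Z})$, it is a polynomial in $E_4$ and $\Delta$, and the identities $E_4/\theta_3^8 = 1 - t$ and $\Delta/\theta_3^{24} = t^2/256$ show that
\[
\Xi_\Lambda(y)^{-1} = \frac{\Theta_\Lambda(iy)}{\theta_3(iy)^{8k}} = P_\Lambda\bigl(t(y)\bigr)
\]
for a polynomial $P_\Lambda \in \mathbb{Q}[t]$ of degree at most $k$ with $P_\Lambda(0) = 1$. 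The theorem is then equivalent to the purely algebraic assertion that $P_\Lambda$ is nondecreasing on $(0, 1/4]$, i.e.\ that $P_\Lambda(t) \ge P_\Lambda(1/4)$ throughout this interval.

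For each known extremal even unimodular lattice, $\Theta_\Lambda$ is uniquely determined inside the finite-dimensional space $M_{4k}(\operatorname{SL}_2(\mathbb{Z}))$ by the prescribed vanishing of its lowest Fourier coefficients, so $P_\Lambda$ can be computed explicitly by a short linear-algebra calculation; in the smallest case, $\Lambda = E_8$, one immediately obtains $P_\Lambda(t) = 1 - t$. I would then run through the remaining known extremal dimensions and verify the monotonicity of $P_\Lambda$ on $(0,1/4]$ by inspecting its derivative. The main obstacle is that I see no a priori structural reason why extremality alone should force $P_\Lambda$ to be monotone on this interval, so the verification is a finite but genuine case-by-case affair; the latent danger is that in some high-dimensional extremal lattice $P_\Lambda$ develops an interior critical point strictly inside $(0, 1/4]$, which would in fact disprove the Belfiore--Sol\'e conjecture for that particular lattice rather than confirm it.
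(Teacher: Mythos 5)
Your reduction is exactly the paper's: write $\Theta_\Lambda$ as a polynomial in $E_4$ and $\Delta$, use $E_4/\vartheta_3^8=1-t$ and $\Delta/\vartheta_3^{24}=t^2/256$ to get $\Xi_\Lambda(y)^{-1}=P_\Lambda(t(y))$, and reduce the theorem to the inequality $P_\Lambda(t)\ge P_\Lambda(1/4)$ on the range of $t$. One genuine improvement over the paper: you obtain the bound $t\le 1/4$, with the value $1/4$ attained at $y=1$, directly from Jacobi's identity $\vartheta_2^4+\vartheta_4^4=\vartheta_3^4$ together with AM--GM and $\vartheta_2(i)=\vartheta_4(i)$, whereas the paper proves this as its Lemma~\ref{rajoite} by a laborious analysis of the derivative of the product expansion of $\vartheta_2\vartheta_4/\vartheta_3^2$. (For the theorem you only need that $1/4$ is attained \emph{at} $y=1$, not \emph{only} there, so the uniqueness assertion you left unproved is dispensable. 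Also, you wrote that $P_\Lambda$ should be ``nondecreasing''; the condition $P_\Lambda(t)\ge P_\Lambda(1/4)$ for $t\le 1/4$, which you correctly identify as the goal, is monotone \emph{nonincrease}, as your own example $P_{E_8}(t)=1-t$ confirms.)

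The gap is that you stop after $E_8$. The theorem quantifies over all known extremal even unimodular lattices --- there are ten relevant dimensions, $8,16,\dots,80$ --- and the case-by-case verification you defer is precisely the substance of the paper's proof. It does go through: the paper tabulates the ten polynomials $P_\Lambda$ and checks in each case that $P_\Lambda'(z)<0$ on the interval $[0,1/4]$. For dimensions up to $64$ this follows from elementary sign estimates; in dimensions $72$ and $80$ the derivative does have real roots, but the smallest ones (approximately $0.3002$ and $0.2889$, respectively) lie above $1/4$, so negativity on $[0,1/4]$ persists. Thus the ``latent danger'' you flag --- an interior critical point of $P_\Lambda$ inside $(0,1/4]$ --- does not materialize, but nothing in your argument rules it out a priori, and as written your proposal establishes the theorem only for $E_8$. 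To complete the proof you must compute the coefficients $b_j$ for each extremal theta series (determined by the vanishing of the low-order Fourier coefficients, as you note) and carry out the derivative check in each of the remaining nine dimensions.
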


The method used here applies for any given even unimodular (and also for some unimodular but not even) lattices. This will be discussed in its own section. 
\section{Preliminaries and Lemmas}

For an excellent source on theta-functions, see e.g. the Chapter 10 in Stein's and Shakarchi's book \cite{steinshakarchi}. Define the theta function
\[
\Theta(z\mid\tau)=\sum_{n=-\infty}^{\infty}e^{\pi i n^2\tau}e^{2\pi i n z}.
\]
Now
\[
\Theta(z\mid \tau)=\prod_{n=1}^{\infty}(1-q^{2n})(1+q^{2n-1}e^{2\pi i z})(1+q^{2n-1}e^{-2\pi i z}),
\]
when $q=e^{\pi i \tau}$
The functions $\vartheta_2$, $\vartheta_3$ and $\vartheta_4$ can be written with the help of $\Theta(z\mid \tau)$ in the following way:
\begin{alignat*}{1}
\vartheta_2(\tau) & =e^{\pi i \tau/4}\Theta\left(\frac{\tau}{2}\mid\tau\right)\\
\vartheta_3(\tau) &=\Theta(0\mid \tau)\\
\vartheta_4(\tau) & =\Theta\left(\frac{1}{2}\mid \tau\right).
\end{alignat*}
Using the product representation for the $\Theta(z\mid \tau)$ function this reads
\begin{alignat*}{1}
\vartheta_2(\tau) & =e^{\pi i \tau/4}\prod_{n=1}^{\infty}(1-q^{2n})(1+q^{2n-1}e^{2\pi i\tau/2})(1+q^{2n-1}e^{-2\pi i \tau/2})\\ & =e^{\pi i \tau/4}\prod_{n=1}^{\infty}(1-q^{2n})(1+q^{2n})(1+q^{2n-2})\\
\vartheta_3(\tau) &=\prod_{n=1}^{\infty}(1-q^{2n})(1+q^{2n-1})(1+q^{2n-1})=\prod_{n=1}^{\infty}(1-q^{2n})(1+q^{2n-1})^2\\
\vartheta_4(\tau) & =\Theta\left(\frac{1}{2}\mid \tau\right)\prod_{n=1}^{\infty}(1-q^{2n})(1+q^{2n-1}e^{2\pi i /2})(1+q^{2n-1}e^{-2\pi i /2})\\ & =\prod_{n=1}^{\infty}(1-q^{2n})(1-q^{2n-1})^2.
\end{alignat*}

A lattice is called unimodular if its determinant $=\pm 1$, and the norms are integral, ie, $||x||^2\in \mathbb{Z}$ for all vectors $x$ on the lattice. Further, it is called even, if $||x||^2$ is even. Otherwise it is called odd. A lattice can be even unimodular only if the dimension is divisible by $8$. Odd unimodular lattices have no such restrictions.

Let us now have a brief look at the even unimodular lattices. Write the dimension $n=24m+8k$. Then the theta series of the lattice can be written as a polynomial of the Eisenstein series $E_4$ and the discriminant function $\Delta$:
\[
\Theta=E_4^{3m+k}+\sum_{j=1}^m b_i E^{3(m-j)+k}\Delta^j.
\]
Since $E_4=^\frac{1}{2}\left(\vartheta_2^8+\vartheta_3^8+\vartheta_4^8\right)$ and $\Delta=\frac{1}{256}\vartheta_2^8\vartheta_3^8\vartheta_4^8$, the theta function of an even unimodular lattice can be easily written as a polynomial of these basic theta functions. Furthermore, the secrecy function can be written as a simple rational function of $\frac{\vartheta_2^4\vartheta_4^4}{\vartheta_3^8}$:
\begin{multline}\label{polynomiksi}
\frac{\Theta_{\mathbb{Z}^n}}{\Theta_{\Lambda}}=\frac{\vartheta_3^{n}}{E_4^{3m+k}+\sum_{j=1}^mE_4^{3(m-j)+k}\Delta^j}\\=\left(\frac{\frac{1}{2}\left(\vartheta_2^8+\vartheta_3^8+\vartheta_4^8\right)+\sum_{j=1}^m\frac{b_j}{256^j\cdot 2^{3(m-j)+k}}\left(\vartheta_2^8+\vartheta_3^8+\vartheta_4^8\right)^{3(m-j)+k}\vartheta_2^{8j}\vartheta_3^{8j}\vartheta_4^{8j}}{\vartheta_3^{3m+k}}\right)^{-1}\\=\left(\left(1-\frac{\vartheta_2^4\vartheta_4^4}{\vartheta_3^8}\right)^{3m+k}+\sum_{j=1}^m\frac{b_j}{256^j}\left(1-\frac{\vartheta_2^4\vartheta_4^4}{\vartheta_3^8}\right)^{3(m-j)+k}\cdot\left(\frac{\vartheta_2^4\vartheta_4^4}{\vartheta_3^8}\right)^{2j}\right)^{-1}.
\end{multline}
Hence, finding the maximum of the secrecy function is equivalent to finding the minimum of the denominator of the previous expression in the range of $\frac{\vartheta_2^4\vartheta_4^4}{\vartheta_3^8}$.

\begin{definition}
An even unimodular lattice is called extremal if the norm of the shortest vector on the lattice is $2m+2$.
\end{definition}

It is worth noticing that the definition of extremal has changed. Earlier (see e.g. \cite{conwayodlyzkosloane}), extremal meant that the shortest vector was of length $\left\lfloor\frac{n}{8}\right\rfloor+1$. With the earlier definition the highest dimensional self-dual extremal lattice is in dimension $24$ (see \cite{conwayodlyzkosloane}), while with the current definition there is a selfdual extremal lattice in dimension $80$ (for a construction, see \cite{bachocnebe}).

Let us now turn to general unimodular lattices, in particular odd ones. Write $n=8\mu+\nu$, where $n$ is the dimension of the lattice. Just like a bit earlier, the theta function of any unimodular lattice (regardless of whether it's even or odd), can be written as a polynomial (see e.g. (3) in \cite{conwayodlyzkosloane}):
\[
\Theta_{\Lambda}=\sum_{r=0}^{\mu}a_r\vartheta_3^{n-8r}\Delta_8^r,
\]
where $\Delta_8=\frac{1}{16}\vartheta_2^4\vartheta_4^4$. Hence
\begin{equation}\label{polynomiksi2}
\frac{\Theta_{\mathbb{Z}^n}}{\Theta_{\Lambda}}=\left(\sum_{r=0}^{\mu}\frac{a_r}{16^r}\frac{\vartheta_2^{4r}\vartheta_4^{4r}}{\vartheta_3^{8r}}\right)^{-1}.
\end{equation}
Again, to determine the maximum of the function, it suffices to consider the denominator polynomial in the range of $\frac{\vartheta_2\vartheta_4}{\vartheta_3}$.

The following lemma is easy, and follows from the basic properties of the theta functions. The proof is here for completeness.

\begin{lemma}
Let $y\in \mathbb{R}$. The function
\[
f(y)=\frac{\vartheta_4^4(yi)\vartheta_2^4(yi)}{\vartheta_3^8(yi)}
\]
has symmetry: $f(y)=f\left(\frac{1}{y}\right)$
\end{lemma}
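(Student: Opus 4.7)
The plan is to apply the standard Jacobi modular transformations for the theta constants under $\tau \mapsto -1/\tau$, namely
\[
\vartheta_2(-1/\tau) = \sqrt{-i\tau}\,\vartheta_4(\tau), \quad \vartheta_3(-1/\tau) = \sqrt{-i\tau}\,\vartheta_3(\tau), \quad \vartheta_4(-1/\tau) = \sqrt{-i\tau}\,\vartheta_2(\tau).
\]
These formulas can be quoted directly from any standard reference on theta functions (including the Stein-Shakarchi text cited at the start of Section~2), so no derivation is needed.

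Next I would specialize to the imaginary axis by setting $\tau = yi$ with $y > 0$. Then $-1/\tau = i/y$ and the prefactor becomes $\sqrt{-i(yi)} = \sqrt{y}$. Substituting into the three transformation laws gives
\[
\vartheta_2(i/y) = \sqrt{y}\,\vartheta_4(yi), \qquad \vartheta_3(i/y) = \sqrt{y}\,\vartheta_3(yi), \qquad \vartheta_4(i/y) = \sqrt{y}\,\vartheta_2(yi).
\]
Note that the roles of $\vartheta_2$ and $\vartheta_4$ are swapped, while $\vartheta_3$ is preserved up to the common factor.

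Finally, I would plug these into $f(1/y)$. Raising to the appropriate powers produces $y^{2}\vartheta_{4}^{4}(yi)\cdot y^{2}\vartheta_{2}^{4}(yi)$ in the numerator and $y^{4}\vartheta_{3}^{8}(yi)$ in the denominator, so the $y$-factors cancel exactly and the swap $\vartheta_2 \leftrightarrow \vartheta_4$ is absorbed by the symmetric expression $\vartheta_2^{4}\vartheta_4^{4}$. This yields $f(1/y) = f(y)$.

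There is no real obstacle here beyond correctly tracking the $\sqrt{y}$ factors; the key design feature of the invariant $\vartheta_2^4\vartheta_4^4/\vartheta_3^8$ is precisely that the modular weights balance ($4+4$ on top, $8$ on bottom) and that the numerator is symmetric in $\vartheta_2$ and $\vartheta_4$, which is exactly what makes this a one-line verification.
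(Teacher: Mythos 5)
Your proposal is correct and follows essentially the same route as the paper: both invoke the Jacobi transformation formulas $\vartheta_2(i/y)=\sqrt{y}\,\vartheta_4(yi)$, $\vartheta_3(i/y)=\sqrt{y}\,\vartheta_3(yi)$, $\vartheta_4(i/y)=\sqrt{y}\,\vartheta_2(yi)$ (the paper cites them from Conway--Sloane) and then observe that the powers of $\sqrt{y}$ cancel while the $\vartheta_2\leftrightarrow\vartheta_4$ swap is absorbed by the symmetry of $\vartheta_2^4\vartheta_4^4/\vartheta_3^8$. No gaps.
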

\begin{proof}
The formulas (21) in \cite{conwaysloane} give the following:
\begin{alignat*}{1}
\vartheta_2\left(\frac{i}{y}\right)= & \vartheta_2\left(\frac{-1}{yi}\right)=\sqrt{y}\vartheta_4(yi)\\
\vartheta_3\left(\frac{i}{y}\right)= & \vartheta_3\left(\frac{-1}{yi}\right)=\sqrt{y}\vartheta_3(yi)\\
\vartheta_4\left(\frac{i}{y}\right)= & \vartheta_4\left(\frac{-1}{yi}\right)=\sqrt{y}\vartheta_2(yi).
\end{alignat*}
Now
\[
f(y)=\frac{\vartheta_2^4(yi)\vartheta_4^4(yi)}{\vartheta_3^8(yi)}=\frac{y^{-2}\vartheta_4^4\left(\frac{i}{y}\right)y^{-2}\vartheta_2^4\left(\frac{i}{y}\right)}{y^{-4}\vartheta\left(\frac{i}{y}\right)}=f\left(\frac{1}{y}\right),
\]
which was to be proved.
\end{proof}

We may now formulate a lemma that is crucial in the proof of the main theorem:
\begin{lemma}\label{rajoite}
Let $y\in \mathbb{R}$. The function
\[
\frac{\vartheta_4^4(yi)\vartheta_2^4(yi)}{\vartheta_3^8(yi)}
\]
attains its maximum when $y=1$. This maximum is $\frac{1}{4}$.
\end{lemma}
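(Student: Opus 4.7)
The plan is to combine the classical Jacobi quartic identity $\vartheta_2^4(\tau)+\vartheta_4^4(\tau)=\vartheta_3^4(\tau)$ with the AM--GM inequality, and then use the transformation formulas already recorded in the proof of the previous lemma to certify that equality is achieved exactly at $y=1$.

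First I would note that for $\tau=yi$ with $y>0$, the infinite product representations displayed at the beginning of the Preliminaries section consist entirely of real positive factors (each $q^{k}=e^{-\pi y k}$ is a positive real), so $\vartheta_2(yi)$, $\vartheta_3(yi)$, $\vartheta_4(yi)$ are all strictly positive real numbers. Set
\[
u(y)=\frac{\vartheta_2^4(yi)}{\vartheta_3^4(yi)},\qquad v(y)=\frac{\vartheta_4^4(yi)}{\vartheta_3^4(yi)}.
\]
Jacobi's identity then reads $u(y)+v(y)=1$, with both $u(y),v(y)>0$, and the function we wish to bound is simply $f(y)=u(y)v(y)$. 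The elementary identity $u(1-u)=\tfrac{1}{4}-(u-\tfrac{1}{2})^2$ (equivalently, AM--GM) gives $f(y)\le \tfrac{1}{4}$ for every $y>0$, with equality precisely when $u(y)=v(y)=\tfrac{1}{2}$, i.e.\ when $\vartheta_2^4(yi)=\vartheta_4^4(yi)$.

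It remains to verify that this equality is actually realised, and that it occurs at $y=1$. Specialising the transformation formulas used in the previous lemma to $y=1$ yields $\vartheta_2(i)=\vartheta_4(i)$, so $u(1)=v(1)=\tfrac{1}{2}$ and hence $f(1)=\tfrac{1}{4}$. Combined with the upper bound above, this proves both claims of the lemma.

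The only non-routine ingredient is Jacobi's identity itself, which is a classical fact and could be quoted; if one preferred a self-contained derivation, it could be obtained by matching the product expansions of $\vartheta_2^4+\vartheta_4^4$ and $\vartheta_3^4$ displayed earlier, but this calculation is standard and would be the main (though entirely mechanical) piece of writing to do. The rest of the argument is a one-line application of AM--GM together with the symmetry point already implicit in the preceding lemma.
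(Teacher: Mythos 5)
Your proof is correct, and it takes a genuinely different and considerably shorter route than the paper. The paper works from the product representations to rewrite $\vartheta_2\vartheta_4/\vartheta_3^2$ as $2\bigl(g^{1/24}\prod_{n\ge 1}(1+(-g)^n)\bigr)^6$ with $g=e^{-\pi y}$, then shows by a second-derivative estimate (splitting off the $n=1,2$ terms and pairing consecutive odd/even terms) that the relevant logarithmic derivative has at most one zero, and finally invokes the symmetry $f(y)=f(1/y)$ to place that zero at $y=1$. Your argument replaces all of this calculus with the Jacobi identity $\vartheta_2^4+\vartheta_4^4=\vartheta_3^4$ and the observation that two positive numbers summing to $1$ have product at most $\tfrac14$, with the value $\tfrac14$ actually attained at $y=1$ because $\vartheta_2(i)=\vartheta_4(i)$. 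This is legitimate: positivity of the three theta values on the imaginary axis is immediate from the product expansions, and the paper itself already uses Jacobi's identity implicitly --- it is exactly what turns $\tfrac12(\vartheta_2^8+\vartheta_3^8+\vartheta_4^8)/\vartheta_3^8$ into $1-\vartheta_2^4\vartheta_4^4/\vartheta_3^8$ in (\ref{polynomiksi}) and in the $E_8$ computation --- so quoting it costs nothing. The one thing your argument does not deliver, and the paper's derivative analysis does, is that $y=1$ is the \emph{unique} maximizer: you show equality holds iff $\vartheta_2^4(yi)=\vartheta_4^4(yi)$ but do not rule out other solutions of that equation. Since the lemma only asserts that the maximum is attained at $y=1$ and equals $\tfrac14$, and since the main theorem only needs the upper bound $z\le\tfrac14$ on the range of $z$, this is not a gap; if uniqueness were wanted, a short monotonicity argument for $\vartheta_2^4(yi)-\vartheta_4^4(yi)$ in $y$ would supply it. Your approach has the additional advantage of making the bound transparent and exact, whereas the paper's route requires locating roots of auxiliary polynomials numerically.
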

\begin{proof}

To shorten the notation, write $g=e^{-\pi y}$. Notice that when $y$ increases, $g$ decreases and vice versa. Using the product representations for the functions $\vartheta_2(yi)$, $\vartheta_3(yi)$ and $\vartheta_4(yi)$, we obtain
\begin{multline*}
\frac{\vartheta_2(yi)\vartheta_4(yi)}{\vartheta_3(yi)^2}\\=\frac{\left(g^{1/4}\prod_{n=1}^{\infty}(1-g^{2n})(1+g^{2n})(1+g^{2n-2})\right)\left(\prod_{n=1}^{\infty}(1-g^{2n})(1-g^{2n-1})^2\right)}{\left(\prod_{n=1}^{\infty}(1-g^{2n})(1+g^{2n-1})^2\right)^2}\\
=g^{1/4}\left(\prod_{n=1}^{\infty}(1+g^{2n})(1+g^{2n-2})\right)\left(\prod_{n=1}^{\infty}(1-g^{2n-1})^2\right)\left(\prod_{n=1}^{\infty}(1+g^{2n-1})^{-4}\right).
\end{multline*}
Now
\[
\prod_{n=1}^{\infty}(1+g^{2n-2})=2\prod_{n=1}^{\infty}(1+g^{2n})
\]
and
\[
\prod_{n=1}^{\infty}(1+g^{2n-1})^{-4}=\prod_{n=1}^{\infty}(1+(-g)^n)^4=\left(\prod_{n=1}^{\infty}(1+g^{2n})\right)^4\left(\prod_{n=1}^{\infty}(1-g^{2n-1})\right)^4.
\]
Combining all these pieces together, we obtain
\begin{multline*}
\frac{\vartheta_2(yi)\vartheta_4(yi)}{\vartheta_3(yi)^2}\\=2g^{1/4}\left(\prod_{n=1}^{\infty}(1+g^{2n})^2\right)\left(\prod_{n=1}^{\infty}(1-g^{2n-1})^2\right)\left(\prod_{n=1}^{\infty}(1+g^{2n})\right)^4\left(\prod_{n=1}^{\infty}(1-g^{2n-1})\right)^4\\=2g^{1/4}\left(\prod_{n=1}^{\infty}(1+g^{2n})^6\right)\left(\prod_{n=1}^{\infty}(1-g^{2n-1})^6\right)=2\left(g^{1/24}\prod_{n=1}^{\infty}(1+(-g)^n)\right)^6.
\end{multline*}
Since the factor $2$ is just a constant, it suffices to consider the function $g^{1/24}\prod_{n=1}^{\infty}(1+(-g)^n)$. To find the maximum, let us first differentiate the function:
\[
\frac{\partial}{\partial g}\left(g^{1/24}\prod_{n=1}^{\infty}(1+(-g)^n)\right)=\left(g^{1/24}\prod_{n=1}^{\infty}(1+(-g)^n)\right)\left(\frac{1}{24g}+\sum_{n=1}^{\infty}\frac{n(-1)^ng^{n-1}}{1+(-g)^n}\right).
\]
Since $g^{1/24}\prod_{n=1}^{\infty}(1+(-g)^n)$ is always positive, it suffices to analyze the part $\frac{1}{24g}+\sum_{n=1}^{\infty}\frac{n(-1)^ng^(n-1)}{1+(-g)^n}$ to find the maxima. We wish to prove that the derivate has only one zero, because if it has only one zero, then this zero has to be located at $y=1$ (because the original function has symmetry, and therefore, a zero in the point $y$ results in a zero in the point $\frac{1}{y}$ which has to be separate unless $y=1$. To show that the derivative has only one zero, let us consider the second derivative, or actually, the derivative of the part $\frac{1}{24g}+\sum_{n=1}^{\infty}\frac{n(-1)^ng^{n-1}}{1+(-g)^n}$. Now
\[
\frac{\partial}{\partial g}\left(\frac{1}{24g}+\sum_{n=1}^{\infty}\frac{n(-1)^ng^{n-1}}{1+(-g)^n}\right)=-\frac{1}{24g^2}+\sum_{n=1}^{\infty}\left(\frac{n(n-1)(-1)^ng^{n-2}}{1+(-g)^n}-\frac{n^2g^{2(n-1)}}{(1+(-g)^n)^2}\right).
\]
Now we wish to show that this is negative when $g\in(0,1)$. Let us first look at the term $-\frac{1}{24g^2}$ and the terms in the sum corresponding the values $n=1$ and $n=2$. Their sum is
\[
-\frac{1}{24g^2}-\frac{1}{(1-g)^2}+\frac{2-2g^2}{(1+g^2)^2}=\frac{-73g^6+98g^5-51g^4-92g^3+21g^2+2g-1}{24g^2(1-g)^2(1+g^2)^2}.
\]
The denominator is positive when $g\in (0,1)$, and the nominator has two real roots, which are both negative (approximately $g_1\approx-0.719566$ and $g_2\approx-0.196021$). On positive values of $g$, the nominator is always negative. In particular, the nominator is negative when $g\in (0,1)$.

Let us now consider the terms $n>2$, and show that the sum is negative. Since the original function has symmetry $y\rightarrow \frac{1}{y}$, and we are only considering the real values of the theta series, we may now limit ourselves to the interval $y\in [1,\infty)$, which means that $g\in (0,e^{-\pi}]$.  Let us now show that the sum of two consecutive terms where the first one corresponds an odd value of $n$, and the second one an even value of $n$ is negative. The sum looks like the following:
\[
-\frac{n(n-1)g^{n-2}}{1-g^n}-\frac{n^2g^{2(n-1)}}{(1-g^n)^2}+\frac{n(n+1)g^{n-1}}{1+g^{n+1}}-\frac{(n+1)^2g^{2n}}{(1+g^{n+1})^2}.
\]
Let us estimate this, and take a common factor:
\begin{multline*}
<g^{n-2}n\left(-\frac{n-1}{1-g^n}-\frac{ng^n}{(1-g^n)^2}+\frac{(n+1)g}{1+g^{n+1}}-\frac{(n+1)g^{n+2}}{(1+g^{n+1})^2}\right)\\=g^{n-2}n\left(-\frac{n-1+g^n}{(1-g^n)^2}+\frac{(n+1)g}{(1+g^{n+1})^2}\right)<g^{n-2}n\left(\frac{-(n-1)-g^n+(n+1)g}{(1+g^{n+1})^2}\right)<0,
\end{multline*}
when
\[
(n-1)+g^n>(n+1)g.
\]
Since $(n-1)+g^n>(n-1)$, and $(n+1)g\leq (n+1)e^{-\pi}<\frac{n+1}{10}<n-1$, when $n\geq 2$, this proves that the first derivative has only one zero. This zero is at $y=1$. Since the second derivative is negative, it means that this point is actually the maximum of the function. The maximum value is:
\[
\frac{\vartheta_2^4(i)\vartheta_4^4(i)}{\vartheta_3^8(i)}=\frac{1}{4}
\]\end{proof}
\section{Proof of the main theorem}
Let us first do on the lattice $E_8$ as a warm-up case.
We wish to show that
\begin{theorem}
\begin{equation}\label{tavoite}
\Xi_{E_8}(y)\leq \Xi_{E_8}(1).
\end{equation}
\end{theorem}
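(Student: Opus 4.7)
The plan is to exploit the general formula (\ref{polynomiksi}) in the warm-up case where the parameters are as simple as possible. For the lattice $E_8$ the dimension is $n=8$, so $n=24m+8k$ with $m=0$ and $k=1$, and the sum over $j$ in (\ref{polynomiksi}) is empty. Substituting these values, I expect the secrecy function to reduce to
\[
\Xi_{E_8}(y)=\left(1-\frac{\vartheta_2^4(yi)\vartheta_4^4(yi)}{\vartheta_3^8(yi)}\right)^{-1}.
\]
As a sanity check, this agrees with $\Theta_{E_8}=E_4=\vartheta_3^8-\vartheta_2^4\vartheta_4^4$, which itself follows at once from $E_4=\frac{1}{2}(\vartheta_2^8+\vartheta_3^8+\vartheta_4^8)$ together with the Jacobi identity $\vartheta_3^4=\vartheta_2^4+\vartheta_4^4$.

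Once the expression above is in hand, the rest is almost immediate. The map $x\mapsto (1-x)^{-1}$ is strictly increasing on $[0,1)$, so maximizing $\Xi_{E_8}(y)$ over $y>0$ is equivalent to maximizing $f(y)=\vartheta_2^4(yi)\vartheta_4^4(yi)/\vartheta_3^8(yi)$ over $y>0$. By Lemma \ref{rajoite}, the unique maximum of $f(y)$ occurs at $y=1$ with value $\tfrac{1}{4}$, and in particular $f(y)<1$ for every $y>0$, so the denominator $1-f(y)$ stays strictly positive and there is no sign or domain issue in inverting. Substituting the value $f(1)=1/4$ then gives the explicit secrecy gain $\Xi_{E_8}(1)=4/3$.

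There is no real obstacle in this warm-up case; the only thing to be careful about is checking that the algebraic simplification to the rational function of $f(y)$ above is correct, and that the monotonicity argument is used in the right direction (we want a maximum of the secrecy function, and $(1-x)^{-1}$ is indeed increasing in $x$, so the maximum of the numerator-style quantity $f(y)$ translates directly into the maximum of $\Xi_{E_8}(y)$). All of the genuinely analytic work, namely controlling the infinite product and its logarithmic derivative, has already been absorbed into Lemma \ref{rajoite}.
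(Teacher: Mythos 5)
Your proposal is correct and follows essentially the same route as the paper: both reduce $\Xi_{E_8}$ to $\left(1-\vartheta_2^4\vartheta_4^4/\vartheta_3^8\right)^{-1}$ (the paper by expanding $\frac{1}{2}(\vartheta_2^8+\vartheta_3^8+\vartheta_4^8)/\vartheta_3^8$ via the Jacobi identity, you by specializing (\ref{polynomiksi}) with $m=0$, $k=1$ and the same identity as a sanity check) and then invoke the monotonicity of $x\mapsto(1-x)^{-1}$ together with Lemma \ref{rajoite}. The extra observations you make (positivity of the denominator and the explicit value $\Xi_{E_8}(1)=4/3$) are correct but not needed beyond what the paper records.
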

\begin{proof}
 Notice that
\begin{multline*}
\Xi_{E_8}(yi)=\left(\frac{1}{2}\left(\frac{\vartheta_2(yi)^8+\vartheta_3(yi)^8+\vartheta_4(yi)^8}{\vartheta_3(yi)^8}\right)\right)^{-1}\\=\left(\frac{1}{2}+\frac{\left(\vartheta_2^4(yi)+\vartheta_4^4(yi)\right)^2-2\vartheta_2^4(yi)\vartheta_4^4(yi)}{2\vartheta_3(yi)^8}\right)^{-1}=\left(1-\frac{\vartheta_2^4(yi)\vartheta_4^4(yi)}{\vartheta_3^8(yi)}\right)^{-1},
\end{multline*}
Therefore, to show that (\ref{tavoite}) holds, it suffices to show that
\[
\frac{\vartheta_2(yi)^4\vartheta_4(yi)^4}{\vartheta_3(yi)^8}\leq \frac{\vartheta_2(i)^4\vartheta_4(i)^4}{\vartheta_3(i)^8},\] which is equivalent to showing that 
\[
\frac{\vartheta_2(yi)\vartheta_4(yi)}{\vartheta_3(yi)^2}\leq \frac{\vartheta_2(i)\vartheta_4(i)}{\vartheta_3(i)^2},
\] 
which we have already done in Lemma \ref{rajoite}.
\end{proof}
Let us now concentrate on the other cases. Again, write $z=\frac{\vartheta_2^4\vartheta_4^4}{\vartheta_3^8}$. The following table gives the secrecy functions of all known extremal even unimodular lattices (notice that these are known only in dimensions $8-80$):\\[0.1cm]
\begin{tabular}{c|c}
dimension & $\Xi$ \\ \hline
$8$ & $\left(1-z\right)^{-1}$\\
$16$ & $\left((1-z)^{2}\right)^{-1}$ \\
$24$ & $\left((1-z)^3-\frac{45}{16}z^2\right)^{-1}$ \\
$32$ & $\left((1-z)^4-\frac{15}{4}(1-z)z^2\right)^{-1}$ \\
$40$ & $\left((1-z)^5-\frac{75}{16}(1-z)^2z^2\right)^{-1}$ \\
$48$ & $\left((1-z)^6-\frac{45}{8}(1-z)^3z^2+\frac{3915}{2048}z^4\right)^{-1}$\\
$56$ & $\left((1-z)^7-\frac{105}{16}(1-z)^4z^2+\frac{21735}{4096}(1-z)z^4\right)^{-1}$ \\
$64$ & $\left((1-z)^8-\frac{15}{2}(1-z)^5z^2+\frac{4905}{512}(1-z)^2z^4\right)^{-1}$ \\
$72$ & $\left((1-z)^9-\frac{135}{16}(1-z)^6z^2+\frac{60345}{4096}(1-z)^3z^4-\frac{53325}{32768}z^6\right)^{-1}$\\
$80$ & $\left((1-z)^{10}-\frac{75}{8}(1-z)^7z^2+\frac{42525}{2048}(1-z)^4z^4-\frac{202125}{32768}(1-z)z^6\right)^{-1}$
\end{tabular}\\[0.3cm]
It suffices to show that the first derivatives of the denominators are negative because then the denominator is decreasing, and the function is increasing and obtains its maximum at $z=\frac{1}{4}$.

In dimension $16$ the derivative is
\[
-2(1-z)<0.
\]
In dimension $24$ the derivative is
\[
-3(1-z)^2-\frac{45}{8}z^2<0.
\] 
In dimension $32$ the derivative is
\[
-4(1-z)^3-\frac{15}{2}z(1-z)+\frac{15}{4}z^2=-4(1-z)^3-\frac{15}{4}z(2-2z+z)=-4(1-z)^3-\frac{15}{4}(2-z)<0.
\]
In dimension $40$ the derivative is
\[
-5(1-z)^4-\frac{75}{8}z(1-z)^2+\frac{75}{8}(1-z)z^2=-5(1-z)^4-\frac{75}{8}z(1-z)(2z-1)<0.
\]
In dimension $48$ the derivative is
\[
-6(1-z)^5-\frac{45}{4}(1-z)^3z+\frac{135}{8}(1-z)^2z^2+\frac{3915}{512}z^3<-6(1-z)^5+\frac{3915}{512}z^3<0
\]
In dimension $56$ the derivative is
\begin{multline*}
-7(1-z)^6-\frac{105}{8}z(1-z)^4+\frac{105}{4}(1-z)^3z^2+\frac{21735}{1024}z^3(1-z)-\frac{21735}{4096}z^4\\<-7(1-z)^6+\frac{21735}{1024}z^3(1-z)<0.
\end{multline*}
In dimension $64$ the derivative is
\[
-8(1-z)^7+\frac{75}{2}(1-z)^4z^2-15(1-z)^5z-\frac{4905}{256}z(1-z)^4+\frac{4905}{128}z^2(1-z)^3<0.
\]
In dimension $72$ the derivative is
\[
-9(1-z)^8+\frac{405}{8}(1-z)^5z^2-\frac{135}{8}(1-z)^6z+\frac{60345}{4096}\cdot 4(1-z)^3z^3-\frac{60345}{4096}\cdot 3 (1-z)^2z^4-6\cdot \frac{202125}{32768}z^5,
\]
which has real zeros $z_0\approx 0.3002$ and $z_1\approx 0.5222$, and when $z<z_0$, the derivative is negative, which proves this case.
It remains to consider the dimension 80, where the derivative is
\[
-10(1-z)^9-\frac{75}{4}(1-z)^7z+\frac{525}{8}(1-z)^6z^2+\frac{42525}{512}z^3(1-z)^4-\frac{42525}{512}(1-z)^3z^4+\frac{202125}{32768}z^6-\frac{1212750}{32768}(1-z)z^5,
\]
which has real roots $z_0\approx 0.2889$, $z_1\approx 0.4491$ and $z_2\approx 0.8620$, and is negative when $z<z_1$. This proves the theorem.

\section{Method for any given unimodular lattice}
Let $\Lambda$ be a  unimodular lattice. Then its secrecy function can be written as a polynomial $P(z)$, where $z=\frac{\vartheta_2^4\vartheta_4^4}{\vartheta_3^8}$ as shown in (\ref{polynomiksi}) and \ref{polynomiksi2}. Now, according Lemma \ref{rajoite}, $0\leq z\geq \frac{1}{4}$ (the lower bound does not follow from the lemma but from the fact that $z$ is a square of a real number). Therefore, it suffices to consider the polynomial $P(z)$ on the interval $[0,\frac{1}{4}]$. The conjecture is true if and only if the polynomial obtains its smallest value on the interval at $\frac{1}{4}$. Investigating the behaviour of a given polynomial to show whether one point is its minimum on a short interval is a very straightforward operation.

\end{document}